\documentclass[10pt]{article}
\usepackage[preprint]{tmlr}

\usepackage{amsmath,amssymb,amsthm}
\usepackage{mathtools}
\usepackage{bm}
\usepackage{booktabs}
\usepackage{graphicx}
\usepackage{hyperref}
\usepackage{url}
\usepackage{cleveref}
\usepackage{enumitem}
\usepackage{tikz}
\usetikzlibrary{arrows.meta,positioning}

\def\month{MM}
\def\year{YYYY}
\def\openreview{\url{https://openreview.net/forum?id=XXXX}}

\newtheorem{proposition}{Proposition}

\newcommand{\R}{\mathbb{R}}
\newcommand{\C}{\mathbb{C}}
\newcommand{\diag}{\operatorname{diag}}

\begin{document}

\title{Cumsum-Composable Phase Transport\\for Low-Cost Streaming Keyword Spotting}

\author{\name Mahesh Godavarti \\
  \addr A Carrot, Inc}

\maketitle

\begin{abstract}
State-space sequence models are attractive for streaming speech because they carry a compact state through time, but their training cost can depend on scan-style kernels whose constant factors are unfavorable for short audio tasks and commodity GPUs.
We study a simpler streaming-native alternative for keyword spotting: phase-transport layers whose products reduce to cumulative sums.
Each layer maps frame features to complex channels, transports them by learned rotations, accumulates over a finite window by prefix differences, and then applies a gated residual update.
The same representation has an exact batched training form and an exact online inference form: batch training uses ordinary cumulative sums, while streaming inference maintains the same prefix state with one update per frame and recovers rolling windows by prefix differences.
The key structural assumption is unitary transport: inverse rotations used to build the prefix state are norm-preserving, so finite memory is introduced by window or block readouts rather than by ill-conditioned inverse decay factors.

On Google Speech Commands v2 with 12 labels, mel+cumsum models are competitive with our compact convolutional baseline.
The strongest single-seed mel+cumsum run reaches 97.3\% test accuracy; a 51.6K-parameter tied variant reaches the same value, and a 24.8K-parameter tied model reaches 96.8\%, comparable to a 25.6K-parameter MelCNNMaxPool baseline at 97.1\%.
In a matched cumsum-versus-scan benchmark with the same mel front end, tied projections, normalization, and gating, cumsum+window gives comparable test accuracy, 94.82\% versus 94.33\% for a learned-decay scan, while training 1.07x faster and reducing single-example latency from 7.09 ms to 5.01 ms on a Tesla T4.
An optimized raw-audio cumsum front end matches mel front-end single-example latency in this setup, 337 microseconds versus 359 microseconds.
Overall, cumsum-composable local transport retains competitive keyword-spotting accuracy while replacing scan-style temporal aggregation with ordinary cumulative sums, prefix differences, and a small streaming state.
\end{abstract}

\section{Introduction}

Streaming speech systems need to preserve temporal evidence while processing audio under tight latency and cost constraints.
Modern state-space models (SSMs) are one answer: they replace all-pairs attention over long histories with a compact recurrent state, and selective variants can choose what to retain or forget as input arrives~\citep{gu2021s4,smith2023s5,gu2023mamba}.
For short-window keyword spotting, however, the practical tradeoff is different.
A one-second command is a finite local pattern, and compact CNNs already map well to commodity hardware~\citep{zhang2017helloedge,choi2019tcresnet,kim2021bcresnet}.
If an SSM-like model requires scan kernels, repeated state updates, and nontrivial backward passes, its asymptotic linearity may not translate into low training cost on small tasks.

This paper studies a cumsum-composable alternative.
The central operation is a transported finite-window sum.
Given complex value channels $z_t \in \C^n$ and phase operators $U_t = \diag(e^{i\phi_{t,1}},\ldots,e^{i\phi_{t,n}})$, define
\begin{equation}
  y_t
  =
  U_t
  \sum_{\tau=\max(1,t-W+1)}^t U_\tau^{-1} z_\tau .
  \label{eq:windowed-transport}
\end{equation}
This is a local phase-transport analogue of a recurrent memory: evidence from time $\tau$ is carried into the coordinate frame at time $t$ before aggregation.
When the phases are fixed or are cumulative sums of per-frame increments, \cref{eq:windowed-transport} is computed by prefix sums and prefix differences, one of the standard primitives for parallel sequence computation~\citep{blelloch1990prefix}.
It is therefore easy to train in parallel and easy to run in streaming mode.
The transport makes this more expressive than an ordinary cumulative sum over frames.
The contribution of frame $\tau$ to output time $t$ is $U_tU_\tau^{-1}z_\tau$, so the layer computes learned relative-phase evidence as a function of lag while still admitting a prefix representation.

The experiments in this paper are deliberately modest.
We use Google Speech Commands v2~\citep{warden2018speechcommands}, a short-command benchmark, to ask whether this architecture is worth studying before scaling to full ASR.
Fixed-frequency mel+cumsum models are strong and efficient.
The contribution is a practical architectural result: finite-memory unitary transport gives a useful class of streaming sequence layers whose recurrence factors exactly into prefix accumulators.
The experiments show that this simpler batch and streaming implementation remains accurate enough to be viable on this task.

\paragraph{Contributions.}
We make six claims, all scoped to the local experiments reported here.
\begin{itemize}[leftmargin=*]
\item We formulate a phase-transport layer whose finite-window representation is computed by cumulative sums and has an exact streaming equivalent.
\item We explain why unitary transport is the structural assumption that makes the prefix factorization numerically well conditioned: the inverse transports remain bounded.
\item We describe the corresponding streaming front end: audio is converted to low-latency mel or learned spectral frames, each frame updates the cumsum state once, and rolling keyword scores are emitted from finite-window states.
\item We show that stacked local mel+cumsum layers are competitive with the compact CNN baseline in our local Speech Commands setup, reaching up to 97.3\% test accuracy.
\item We show that tied local cumsum layers retain strong accuracy with small parameter counts, including a 24.8K-parameter model at 96.8\%.
\item We report a matched systems benchmark against a learned-decay scan layer, showing lower latency and comparable accuracy without a custom scan kernel.
\end{itemize}

\section{Cumsum Phase Transport}

\subsection{Windowed Transport by Prefix Differences}

Let $z_t \in \C^n$ be the complex value produced from an input feature vector at time $t$.
Let $U_t \in \C^{n \times n}$ be a diagonal unitary transport operator.
The transported finite-window sum in \cref{eq:windowed-transport} can be written with a prefix accumulator
\begin{equation}
  p_t = \sum_{\tau=1}^t U_\tau^{-1} z_\tau .
\end{equation}
Then
\begin{equation}
  y_t = U_t \left(p_t - p_{t-W}\right),
  \qquad p_s = 0 \text{ for } s \leq 0 .
  \label{eq:prefix-window}
\end{equation}
This is the entire computational advantage: the layer uses a cumulative sum of transported values, a delayed prefix subtraction, and an elementwise rotation back to the current frame.

\begin{proposition}[Exact batch/streaming equivalence]
For fixed model parameters and a fixed input sequence, \cref{eq:prefix-window} produces the same $y_t$ whether $p_t$ is computed by a batched cumulative sum over the whole sequence or by the streaming update
\[
  p_t = p_{t-1} + U_t^{-1}z_t .
\]
\end{proposition}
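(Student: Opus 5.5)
The plan is to show that the two procedures produce the same prefix accumulator sequence $(p_t)_{t\ge 0}$. Once that holds, \cref{eq:prefix-window} is a fixed function of $p_t$, $p_{t-W}$ and $U_t$, so it yields the same $y_t$. The argument is therefore a short induction on $t$, followed by substitution.

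First fix the conventions. The batched cumulative sum computes $p_t^{\mathrm{batch}}=\sum_{\tau=1}^t U_\tau^{-1}z_\tau$ for $1\le t\le T$. The streaming recursion is initialized with $p_0^{\mathrm{stream}}=0$, matching the convention $p_s=0$ for $s\le 0$, and extended by $p_s=0$ for negative $s$ in both cases. The parameters and input are fixed, and the phases $\phi_{t,k}$ at time $t$ depend only on data available at time $t$ (fixed or cumulative phase increments). Hence the operators $U_t$ and values $z_t$ are identical in both procedures, and $U_t^{-1}=U_t^{*}$ exists because $U_t$ is unitary.

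Next, induct on $t$. The base case $t=0$ holds by convention. If $p_{t-1}^{\mathrm{stream}}=p_{t-1}^{\mathrm{batch}}$, then
\[
p_t^{\mathrm{stream}}=p_{t-1}^{\mathrm{batch}}+U_t^{-1}z_t=\sum_{\tau=1}^{t}U_\tau^{-1}z_\tau=p_t^{\mathrm{batch}},
\]
by splitting off the last term of the sum. Substituting into \cref{eq:prefix-window} gives equal $y_t$ for every $t$. As a consistency check, telescoping $p_t-p_{t-W}=\sum_{\tau=\max(1,t-W+1)}^{t}U_\tau^{-1}z_\tau$ recovers \cref{eq:windowed-transport}, and this covers the case $t<W$ through the zero convention.

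The only delicate step is making precise what ``same'' means. Over exact arithmetic the identity is immediate. In floating point, a parallel prefix-sum and a sequential accumulation may associate the additions differently, so the equality should be stated for exact arithmetic. I would add a remark that the discrepancy is bounded by standard summation-error estimates. Because $\|U_\tau^{-1}\|=1$, each summand has norm $\|z_\tau\|$, so no factor amplifies the error. This is the well-conditioning point the paper emphasizes. A second minor check is that the streaming implementation must retain $p_{t-W}$, for example in a ring buffer of the last $W$ prefixes, so that the subtraction uses the identical stored value.
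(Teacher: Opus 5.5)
Your proof is correct and follows the paper's argument. The paper notes that the streaming recurrence unrolls to $p_t=\sum_{\tau=1}^t U_\tau^{-1}z_\tau$, which is exactly the batched cumulative sum, and then substitutes into \cref{eq:prefix-window}; your induction on $t$ from $p_0=0$ makes that unrolling explicit, and your telescoping check, exact-arithmetic caveat, and ring-buffer remark are sound additions the paper does not spell out.
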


\begin{proof}
The streaming recurrence expands to $p_t=\sum_{\tau=1}^t U_\tau^{-1}z_\tau$, which is exactly the batched cumulative sum.
Substituting this prefix into \cref{eq:prefix-window} gives the same finite-window transported sum.
\end{proof}

\subsection{Why Transport Before Accumulation?}

An ordinary cumulative sum aggregates values in a fixed coordinate system.
Temporal order enters only through window membership and through nonlinear layers after the sum.
Phase transport changes the operation: the contribution from time $\tau$ to time $t$ is
\[
  U_tU_\tau^{-1}z_\tau .
\]
For fixed learned phases $\phi_{t,k}=t\omega_k$, the $k$th complex coordinate is weighted by $e^{i(t-\tau)\omega_k}$.
The layer therefore implements learned oscillatory kernels over relative lag, but all such kernels share a cumsum implementation because each term can be moved into a common prefix coordinate system before accumulation.
This is the modeling role of complex rotations: they make the accumulated evidence depend on relative phase, not only on the unordered sum of local values.
Complex exponentials and rotations are standard tools in sequence models, appearing in sinusoidal and rotary position representations as well as diagonal complex state-space parameterizations~\citep{vaswani2017attention,su2021roformer,gupta2022dss,gu2022s4d}.
The contribution here is narrower: a finite-window, unitary transport parameterization whose online recurrence and batched training computation are exactly the same prefix representation.

\subsection{Streaming Front End}

The cumsum layer is naturally paired with a streaming acoustic front end.
At deployment, incoming waveform samples are held in a short analysis buffer.
Every hop, for example 80 samples (5 ms) or 160 samples (10 ms) at 16 kHz, the front end emits one acoustic frame.
That frame can be a conventional log-mel vector, as in the strongest experiments here, or a learned cumsum spectral vector as in the learned-front-end experiments.

The online pipeline is:
\[
  x_{1:t}
  \;\to\;
  m_t
  \;\to\;
  z_t
  \;\to\;
  p_t = p_{t-1} + U_t^{-1}z_t
  \;\to\;
  y_t = U_t(p_t-p_{t-W})
  \;\to\;
  s_t .
\]
Here $m_t$ is the acoustic frame, $z_t$ is the complex value injected into the transport layer, $p_t$ is the maintained prefix state, $y_t$ is the current finite-window transported state, and $s_t$ is a keyword logit or trigger score.
The model therefore does not need to wait for a full one-second clip.
It can update a rolling score at the frame hop rate and emit when the score crosses a task-specific threshold.
Continuous-stream keyword spotting normally requires additional trigger metrics such as false accepts, false rejects, and detection latency~\citep{rybakov2020streaming,jose2022latency}; the present experiments evaluate the model first as a clip classifier with an exact streaming implementation.
\Cref{fig:streaming-layer} summarizes the resulting streaming layer.

The memory needed per layer is small.
For a hard window, the implementation stores the current prefix $p_t$ and a ring buffer of old prefixes needed to form $p_t-p_{t-W}$.
For block decay, it stores block summaries and updates them at block boundaries.
In both cases, online inference uses the same representation as batched training; the difference is only whether prefixes are produced by a full-sequence cumulative sum or by the one-frame recurrence.

\subsection{Fixed Learned Phases}

The fixed-frequency models use one learned frequency per complex channel and layer:
\begin{equation}
  \phi_{t,k} = t\omega_k .
\end{equation}
This gives each complex channel a learned local oscillator.
The resulting rotations are unitary, so the transport part is norm-preserving; finite memory is supplied by the window or block-decay readout rather than by continuously damping the oscillator.
Because the phases are deterministic functions of time within a layer, the whole sequence can be processed by cumulative sums during training and by a one-step prefix update during streaming inference.

\subsection{Why Unitary Transport?}

The prefix construction requires applying an inverse transport to each incoming value before accumulation.
For unitary rotations, $U_t^{-1}=U_t^*$ and $\|U_t^{-1}z\|=\|z\|$, so the prefix terms remain on the same numerical scale as the projected values.
This is the algebraic reason that batch cumsums and streaming prefix updates are both simple and stable.

The contrast with decaying transport clarifies the design choice.
Suppose one tried to use $A_t=\rho^tU_t$ with $0<\rho<1$ in the same prefix form.
The prefix accumulator would contain terms $A_\tau^{-1}z_\tau=\rho^{-\tau}U_\tau^{-1}z_\tau$.
Although the final readout could multiply by a small forward factor and cancel this scaling in exact arithmetic, the prefix state itself would involve division by progressively smaller numbers.
In finite precision this can create large scale separation, overflow risk, and loss of low-order information.
Unitary transport avoids this class of conditioning issues because the inverse used to build prefixes has norm one.
For this reason, forgetting is introduced outside the transport inversion, through hard-window subtraction or block-level weighting.

\subsection{Layer Form}

The main mel+cumsum layer follows the implementation in the local experiment code.
A log-mel frame $m_t \in \R^{40}$ is embedded into $h_t \in \R^d$.
At each cumsum layer, a linear projection creates real and imaginary channels:
\[
  (a_t,b_t) = P_\ell h_t,
  \qquad
  z_t = a_t + i b_t .
\]
The transported window sum $y_t$ is computed by \cref{eq:prefix-window}, converted back to real coordinates, normalized, and passed through a gated residual update:
\begin{equation}
  h_t \leftarrow h_t + \operatorname{GLU}(\operatorname{BN}([\Re y_t,\Im y_t])) .
\end{equation}
For windowed classification, the model collapses the final complex channels to magnitudes and applies temporal max pooling.
Tied variants share the projection and GLU across layers while retaining per-layer frequency and normalization parameters.
\Cref{tab:fixed-architecture} summarizes the simple fixed-frequency architecture used for the main mel+cumsum results.

\begin{table}[t]
\centering
\caption{Simple fixed mel+cumsum architecture. The main experiments vary $d$, $L$, $W$, hop length, and whether the projection/GLU are tied across layers.}
\label{tab:fixed-architecture}
\small
\begin{tabular}{p{0.22\linewidth}p{0.68\linewidth}}
\toprule
Component & Implementation \\
\midrule
Acoustic front end & 40-bin log-mel spectrogram, 16 kHz audio, 400-sample FFT window, hop 80 or 160 samples; SpecAugment during training. \\
Embedding & Linear map $m_t\in\R^{40}$ to $h_t\in\R^d$. \\
Value projection & Linear map $h_t\mapsto(a_t,b_t)\in\R^{d/2}\times\R^{d/2}$, interpreted as $z_t=a_t+ib_t$. \\
Fixed phase bank & Per-layer learned frequencies $\omega_{\ell,k}$ with $\phi_{t,k}=t\omega_{\ell,k}$ and $U_t=\diag(e^{i\phi_{t,k}})$. \\
Cumsum transport & Compute $p_t=\sum_{\tau\leq t}U_\tau^{-1}z_\tau$ and read out $y_t=U_t(p_t-p_{t-W})$. \\
Layer update & Concatenate $[\Re y_t,\Im y_t]$, apply BatchNorm, GLU, and residual addition. \\
Classifier & Convert final real/imaginary channels to magnitudes, max-pool over time, and apply a linear 12-way classifier. \\
Tied variant & Share the value projection and GLU across all layers; keep frequencies and BatchNorm separate per layer. \\
\bottomrule
\end{tabular}
\end{table}

\begin{figure}[t]
\centering
\begin{tikzpicture}[
  >=Latex,
  node distance=0.65cm and 0.65cm,
  block/.style={draw, rounded corners=1pt, align=center, minimum width=2.0cm, minimum height=0.62cm, font=\scriptsize, inner sep=3pt}
]
\node[block] (frames) {streaming\\frames $m_t$};
\node[block, right=of frames] (proj) {complex\\projection $z_t$};
\node[block, right=of proj] (inverse) {inverse\\transport};
\node[block, right=of inverse] (prefix) {prefix\\update $p_t$};
\node[block, below=of prefix] (window) {window\\difference};
\node[block, left=of window] (forward) {transport\\to time $t$};
\node[block, left=of forward] (glu) {BN + GLU\\residual};
\node[block, left=of glu] (score) {rolling\\score $s_t$};
\draw[->] (frames) -- (proj);
\draw[->] (proj) -- (inverse);
\draw[->] (inverse) -- (prefix);
\draw[->] (prefix) -- (window);
\draw[->] (window) -- (forward);
\draw[->] (forward) -- (glu);
\draw[->] (glu) -- (score);
\end{tikzpicture}
\caption{Streaming phase-transport layer. Batch training computes all prefixes by a cumulative sum; streaming inference maintains the same prefix state one frame at a time.}
\label{fig:streaming-layer}
\end{figure}

\subsection{Block Decay}

Hard finite windows are efficient but can be unstable because the representation changes discontinuously when an old term exits the window.
The block-decay variant replaces a single hard window with a geometrically weighted sum of blocks:
\begin{equation}
  y_t
  =
  U_t \sum_{b=0}^{K-1} \lambda^b
  \left(p_{t-bW} - p_{t-(b+1)W}\right).
  \label{eq:block-decay}
\end{equation}
This keeps the computation cumsum-based while adding smooth forgetting.
The main mel experiments in this paper use hard windows; block decay is a cumsum-compatible alternative for settings where smoother forgetting is useful.

\section{Experimental Setup}

\paragraph{Dataset.}
All experiments use Google Speech Commands v2 with one-second 16 kHz clips.
The label set has 12 classes: ten commands (\texttt{yes}, \texttt{no}, \texttt{up}, \texttt{down}, \texttt{left}, \texttt{right}, \texttt{on}, \texttt{off}, \texttt{stop}, \texttt{go}) plus \texttt{unknown} and \texttt{silence}.
Unknown examples are subsampled to match the average number of target-command examples; silence is generated from the dataset background-noise clips.

\paragraph{Baselines.}
We compare three families.
First, compact convolutional baselines use either log-mel features or raw waveform convolutions.
The strongest mel baseline is MelCNNMaxPool, a TC-ResNet-style one-dimensional temporal CNN over log-mel channels with SpecAugment~\citep{choi2019tcresnet,park2019specaugment}.
Published KWS systems include DS-CNN-style embedded models and stronger broadcast-residual CNNs~\citep{zhang2017helloedge,kim2021bcresnet}; the local MelCNNMaxPool model serves as the controlled compact reference for the experiments below.
Second, learned spectral front ends replace the fixed FFT/mel transform with learned sinusoidal cumsum filters, connecting this part of the study to learned audio front ends such as LEAF~\citep{zeghidour2021leaf}.
Third, a matched scan baseline keeps the same mel front end, tied projections, normalization, and gating as the cumsum model, but replaces each hard-window cumsum with a learned exponential-decay scan.

\paragraph{Training.}
Full runs use the full local training set, approximately 36K examples, for 40 or 80 epochs depending on the experiment.
Smoke tests use a two-epoch subset of roughly 4K training examples and are used only for architectural deconstruction.
Mel/CNN models use Adam with learning rate $10^{-3}$ and weight decay $10^{-4}$.
All reported results in this draft are single-seed unless otherwise stated.
The cumsum-versus-scan systems benchmark was run on a Tesla T4 with CUDA 12.4, PyTorch 2.6, Torchaudio 2.6, and batch size 128 for training.

\section{Results}

\subsection{Mel+Cumsum Is Competitive}

\Cref{tab:main-results} shows the strongest local mel+cumsum and CNN results.
The strongest single-seed cumsum runs reach 97.3\% test accuracy.
A cumsum-composable temporal primitive remains comparable to the local CNN baseline while using an exact streaming prefix representation.
Tying and small local windows preserve most of the performance with much lower parameter counts.

\begin{table}[t]
\centering
\caption{Main Speech Commands results from the local architecture comparison. All numbers are single-seed test accuracies.}
\label{tab:main-results}
\begin{tabular}{llrr}
\toprule
Model & Configuration & Test acc. & Params \\
\midrule
MelCumsumFixed & W=5, 8L, n=80, hop=160, untied, 40ep & 97.3 & 160,892 \\
MelCumsumFixedTied & W=10, 8L, n=120, hop=80, 80ep & 97.3 & 51,612 \\
MelCNNMaxPool & hop=160, 40ep & 97.1 & 25,628 \\
MelCumsumFixedTied & W=5, 8L, n=100, hop=160, 40ep & 97.0 & 37,012 \\
MelCumsumBidirTied & W=10, 8L, n=80, hop=80, 40ep & 96.8 & 31,612 \\
MelCumsumFixedTied & W=10, 8L, n=80, hop=80, 80ep & 96.8 & 24,812 \\
\bottomrule
\end{tabular}
\end{table}

The 24.8K-parameter fixed tied model is the cleanest efficiency result: it is close to the 25.6K-parameter MelCNNMaxPool baseline while replacing temporal convolutions after the mel front end with cumsum-composable phase transport.
The 51.6K-parameter tied model reaches the same observed accuracy as the 160.9K-parameter untied variant.

\subsection{Small Local Windows and Depth Matter}

The final results favor repeated local windows over shallow wide aggregation.
For example, W=10 with eight tied layers reaches 96.8--97.3\% depending on width and training length, making repeated tight windows the strongest configuration in the local sweep.
\Cref{tab:window-depth-app} gives the fixed-model ablation behind this claim: at a comparable 40-frame nominal span, moving from two W=20 layers to eight W=5 layers improves test accuracy from 94.4\% to 96.2\%.
This supports the interpretation that cumsum is best used as a local temporal building block with nonlinear gating between layers, rather than as one large undifferentiated accumulator.

\subsection{Tying Preserves Performance}

The tied variants are important because they separate architectural structure from raw parameter count.
The untied W=5, eight-layer model reaches 97.3\% with 160.9K parameters.
A tied W=10, eight-layer model with n=120 reaches the same 97.3\% test accuracy with 51.6K parameters, a reduction of about 3.1x.
At nearly the same parameter count as the MelCNNMaxPool baseline, the tied W=10, n=80 model reaches 96.8\% with 24.8K parameters.
This suggests that repeated local transport with shared projections is a compact temporal mixer rather than merely a larger alternative to convolution.

\subsection{Cumsum Acoustic Front Ends}

Replacing the FFT/mel front end with sinusoidal cumsum filters gives a direct streaming acoustic front end.
\Cref{tab:frontend} shows representative results with the same CNN backend used for the mel reference.

\begin{table}[t]
\centering
\caption{Front-end comparison. Cumsum spectral front ends provide a streaming acoustic representation before the sequence model.}
\label{tab:frontend}
\small
\begin{tabular}{llrr}
\toprule
Model & Front end & Test acc. & Params \\
\midrule
MelCNNMaxPool & FFT + log-mel & 97.1 & 25,628 \\
CumsumSpecCNN & cumsum, frozen mel phases, log-mag+re+im & 95.2 & 46,828 \\
CumsumSpecCNN & cumsum, learned phases, log-mag+re+im & 93.9 & 30,508 \\
LearnedSpecCNN & cumsum, learned phases, log-mag & 94.1 & 25,668 \\
RawCNN & raw waveform Conv1d & 91.0 & 25,228 \\
\bottomrule
\end{tabular}
\end{table}

The cumsum front end is important because it shows that the same prefix-sum primitive can act before the mel stage as well as after it.
The strongest cumsum front end uses mel-spaced frozen phases and keeps phase-carrying features, reaching 95.2\% before the sequence model.
The detailed ablation in \cref{tab:cumsum-frontend-hop160-app,tab:cumsum-frontend-hop400-app} shows that retaining real and imaginary channels is especially helpful when the hop is large.
On short commands, the fixed FFT/mel/log pipeline still supplies a strong inductive bias.
The strongest architecture in this draft therefore uses cumsum transport after mel features, rather than replacing mel features entirely.

\subsection{Cumsum Versus Learned-Decay Scan}

The most direct systems comparison keeps the mel front end and model shape fixed.
Both models use 40-bin log-mel frames at hop 80, $d=80$, eight tied layers, the same projections, BatchNorm, GLU residual updates, and max-pooling readout.
The only substantive difference is the temporal primitive: MelCumsumFixed uses a hard window $W=10$ implemented by \texttt{torch.cumsum} and prefix subtraction, while MelScanFixed replaces the window with learned per-channel exponential decay implemented by a custom Triton scan.
The scan model has 320 extra parameters, one decay scalar per frequency per layer.

\begin{table}[t]
\centering
\caption{Matched cumsum-versus-scan benchmark. Both models use the same mel front end and tied eight-layer architecture.}
\label{tab:scan-benchmark}
\small
\begin{tabular}{llrrrr}
\toprule
Model & Temporal primitive & Params & Best val & Test & B=1 latency \\
\midrule
MelCumsumFixed & cumsum + hard window $W=10$ & 24,812 & 94.8 & 94.82 & 5.01 ms \\
MelScanFixed & Triton scan + learned decay & 25,132 & 94.2 & 94.33 & 7.09 ms \\
\bottomrule
\end{tabular}
\end{table}

\Cref{tab:scan-benchmark} shows comparable accuracy in this single matched run.
The primary difference is implementation cost: the cumsum model uses ordinary cumulative sums and prefix subtraction, trains in 16.2 seconds per epoch versus 17.4 seconds per epoch, and has 1.42x lower single-example inference latency.
The raw primitive benchmark gives the cleaner implementation signal.
At the mel-layer size $T=200,N=40$, \texttt{torch.cumsum} is 3.2--4.5x faster than the Triton scan across batch sizes 1--128.

\begin{table}[t]
\centering
\caption{Primitive timing at the mel-layer sequence size, $T=200,N=40$, on a Tesla T4.}
\label{tab:primitive-benchmark}
\small
\begin{tabular}{rrrr}
\toprule
Batch & \texttt{torch.cumsum} & Triton scan & Ratio \\
\midrule
1 & 84.9 us & 270.6 us & 3.2x \\
8 & 84.9 us & 273.6 us & 3.2x \\
32 & 73.4 us & 331.8 us & 4.5x \\
128 & 140.5 us & 514.5 us & 3.7x \\
\bottomrule
\end{tabular}
\end{table}

The scan model learns short effective memories.
Across layers, the learned decay values imply effective windows between roughly 3 and 21 frames.
This supports the finite-memory hypothesis: for this keyword-spotting configuration, the learned recurrent memory behaves like a short local filter, which the cumsum window supplies directly.

\subsection{Streaming Front-End Timing}

The cumsum acoustic front end is also sensitive to implementation details.
The initial raw-audio cumsum front end used a strided time dimension and performed window subtraction before downsampling, which made \texttt{torch.cumsum} the bottleneck.
Using a contiguous $(B,N,T)$ layout and indexing stride positions before downstream operations changes the layer-1 single-example latency from 5.8 ms to 284 us.
After these optimizations, the cumsum front end matches the mel front end at batch size one: 337 us for the cumsum front end versus 359 us for FFT+log+embedding.
At large training batches, the cumsum front end remains memory heavier because it materializes the full-resolution complex rotation tensor; fusing rotation, cumsum, window subtraction, and striding is the natural next implementation step.

\section{Discussion}

\paragraph{What the results support.}
The reliable finding is that cumsum-composable local transport is a simple and effective temporal aggregation layer for keyword spotting.
It works best after a mel front end, with small windows, multiple layers, gating, and weight tying.
It has an exact streaming update and a simple batched implementation.
The matched scan benchmark supports the systems argument: when the learned scan converges to short effective memories, the hard-window cumsum version has similar accuracy and lower latency in the measured setup.
This is a useful combination for low-cost streaming speech research.

\paragraph{Relation to SSMs.}
SSMs remain natural for long-context streaming models and production voice systems.
Finite-window speech tasks expose a complementary regime.
When the relevant memory is local and the transport operators are rotations, cumulative sums can recover the useful prefix/window structure without the full machinery of selective scan.
The closest SSM relatives are diagonal complex models, which also use per-channel complex dynamics to generate sequence kernels~\citep{gupta2022dss,gu2022s4d,smith2023s5}.
This work restricts the transport to unit-modulus rotations and supplies memory length explicitly through prefix differences, so the inverse transport used inside the accumulator remains bounded.

\begin{table}[t]
\centering
\caption{Positioning of cumsum phase transport relative to common streaming sequence model families.}
\label{tab:positioning}
\small
\begin{tabular}{p{0.18\linewidth}p{0.24\linewidth}p{0.27\linewidth}p{0.20\linewidth}}
\toprule
Model family & Memory assumption & Training primitive & Streaming form \\
\midrule
CNN/TCN & finite local receptive field & convolution & causal buffer \\
S4/S5 & long or decaying linear state & scan or structured convolution & recurrence \\
Mamba & input-selective state & selective scan & recurrence \\
This work & finite unitary transport & cumulative sum and prefix differences & prefix state, ring buffer \\
\bottomrule
\end{tabular}
\end{table}

\paragraph{Scope.}
The experiments focus on keyword spotting.
The next step is to measure whether the same cumsum transport layer retains its cost and latency advantages under stronger baselines, replicated runs, additional hardware, and streaming trigger metrics used in mobile KWS evaluation~\citep{rybakov2020streaming,jose2022latency}.

\section{Limitations and Next Experiments}

The current draft has several important limitations.
All reported results are single-seed.
The SSM comparisons use local research implementations.
The current systems benchmark is limited to one GPU environment and one matched architecture.
Speech Commands is a keyword-spotting benchmark.
A fuller perturbation suite would add pitch shift, reverberation, codec distortion, speaker holdout, and event-triggered onset metrics.

The immediate next experiments are:
\begin{enumerate}[leftmargin=*]
\item Replicate the best fixed tied mel+cumsum and MelCNNMaxPool runs over at least three seeds.
\item Extend wall-clock, memory, throughput, and profiler logging to CPU, Apple Silicon, and embedded/mobile targets.
\item Compare against standard optimized KWS baselines such as DS-CNN, TC-ResNet, BC-ResNet, CRNN/TCN, and small Conformer variants.
\item Separate three families cleanly: mel+cumsum, learned spectral cumsum, and raw/sample-clocked cumsum.
\item Add event-triggered command onset detection so the model is tested on latency and sparse emissions as well as clip-level accuracy.
\item Expand perturbation tests to pitch shift, reverberation, background noise, codec artifacts, speaker holdout, and controlled phase/magnitude perturbations.
\end{enumerate}

\section{Conclusion}

Cumsum-composable phase transport gives a simple way to build finite-window streaming speech representations with exact batched training and exact online updates.
On the local Speech Commands experiments, the strongest evidence supports fixed-frequency mel+cumsum models: they are competitive with the local compact CNN baseline, benefit from small local windows and depth, and retain strong performance under parameter tying.
The matched benchmark against a learned-decay scan strengthens the systems claim: in the short-memory regime observed here, the cumsum layer is simpler and lower-latency while retaining comparable accuracy.
This makes the next research question concrete: can the cumsum transport layer retain its simplicity, cost advantages, and competitive accuracy under multi-seed evaluation, optimized baselines, broader hardware, and harder acoustic distribution shifts?

\clearpage
\appendix

\section{Additional Ablations}

The appendix reports additional ablations from the local experiment notes.
All entries are single-seed Speech Commands v2 results under the same general training setup described in the main text.
The first tables focus on the simple fixed mel+cumsum architecture; the final tables summarize the cumsum acoustic front-end sweep.

\subsection{Window Size and Depth}

\Cref{tab:window-depth-app} decomposes the tied fixed model by local window size and depth.
The main pattern is that a stack of small windows is stronger than a shallow model with a wider window.
This is consistent with the intended use of the cumsum layer as a local aggregation primitive followed by nonlinear gating.

\begin{table}[ht]
\centering
\caption{Fixed tied mel+cumsum window/depth ablation. The nominal span is $L\times W$ in mel frames and is used only as a rough comparison of temporal coverage.}
\label{tab:window-depth-app}
\small
\begin{tabular}{lrrrrr}
\toprule
Configuration & Nominal span & Layers & Params & Best val & Test \\
\midrule
W=20, 2L & 40 & 2 & 23,612 & 94.6 & 94.4 \\
W=10, 4L & 40 & 4 & 24,012 & 95.7 & 95.8 \\
W=5, 8L & 40 & 8 & 24,812 & 96.3 & 96.2 \\
W=20, 4L & 80 & 4 & 24,012 & 94.9 & 95.2 \\
W=20, 8L & 160 & 8 & 24,812 & 95.0 & 95.5 \\
\bottomrule
\end{tabular}
\end{table}

\subsection{Weight Tying}

\Cref{tab:tying-app} isolates the effect of sharing the value projection and GLU across cumsum layers.
The tied model keeps per-layer frequencies and normalization, reducing parameters by about 3.4x in this setting while staying close to the untied model and the compact CNN reference.

\begin{table}[ht]
\centering
\caption{Fixed mel+cumsum weight-tying ablation for W=20, 4-layer models.}
\label{tab:tying-app}
\small
\begin{tabular}{lrrr}
\toprule
Model & Params & Best val & Test \\
\midrule
MelCumsumFixed, untied & 82,252 & 96.0 & 96.3 \\
MelCumsumFixedTied & 24,012 & 94.9 & 95.2 \\
MelCNNMaxPool reference & 25,628 & 95.4 & 95.4 \\
\bottomrule
\end{tabular}
\end{table}

\subsection{Fixed-Frequency Sweep}

\Cref{tab:fixed-sweep-app} lists representative fixed-frequency mel+cumsum variants from the later local architecture sweep.
The best tied model reaches the same 97.3\% test accuracy as the larger untied model at a substantially smaller parameter count.

\begin{table}[ht]
\centering
\caption{Representative fixed-frequency mel+cumsum sweep.}
\label{tab:fixed-sweep-app}
\small
\begin{tabular}{llrr}
\toprule
Model & Configuration & Test & Params \\
\midrule
MelCumsumFixed & W=5, 8L, n=80, hop=160, 40ep, untied & 97.3 & 160,892 \\
MelCumsumFixedTied & W=10, 8L, n=120, hop=80, 80ep & 97.3 & 51,612 \\
MelCumsumFixedTied & W=5, 8L, n=100, hop=160, 40ep & 97.0 & 37,012 \\
MelCumsumBidirTied & W=10, 8L, n=80, hop=80, 40ep & 96.8 & 31,612 \\
MelCumsumFixedTied & W=10, 8L, n=80, hop=80, 80ep & 96.8 & 24,812 \\
MelCumsumFixedTied & W=10, 8L, n=80, hop=80, 40ep & 96.5 & 24,812 \\
MelCumsumFixedTied & W=5, 8L, n=80, hop=160, 40ep & 96.2 & 24,812 \\
MelCumsumFixedTied & W=40, 2L, n=80, hop=80, 80ep & 94.5 & 23,612 \\
\bottomrule
\end{tabular}
\end{table}

\subsection{Architecture Alternatives}

\Cref{tab:architecture-alternatives-app} summarizes additional architecture variants from the same local sweep.
These runs address a natural reviewer question: whether the gain is simply from stacking many small nonlinear stages or from a more generic progressive architecture.
The tied fixed cumsum model with constant width remains stronger than the progressive cumsum-ResNet variants at similar parameter counts.
The comparison places the main model against nearby compact cumsum architectures and helps separate the fixed phase-transport design from generic compact cumsum variants.

\begin{table}[ht]
\centering
\caption{Architecture alternatives from the local mel+cumsum sweep.}
\label{tab:architecture-alternatives-app}
\small
\begin{tabular}{llrr}
\toprule
Model & Configuration & Test & Params \\
\midrule
MelCNNMaxPool & hop=160, 40ep & 97.1 & 25,628 \\
MelCumsumFixedTied & W=10, 8L, n=80, hop=80, 80ep & 96.8 & 24,812 \\
MelCumsumFixedTied & W=5, 8L, n=80, hop=160, 40ep & 96.2 & 24,812 \\
MelCumsumResNet & W=3, channels [24,24,32,48], 40ep & 94.9 & 25,380 \\
MelCumsumResNet & W=3, channels [16,24,32,48], 40ep & 94.6 & 22,852 \\
MelCumsumMagDeepTied & W=5, 8L, n=90, 40ep & 93.0 & 25,032 \\
\bottomrule
\end{tabular}
\end{table}

\subsection{Cumsum Acoustic Front End}

The main paper uses log-mel features because they remain the strongest acoustic representation in the current experiments.
The cumsum front-end experiments test a separate question: whether the same prefix-sum mechanism can replace the FFT/mel front end while preserving streaming computation from raw audio.
The front end maintains running complex sinusoidal sums over waveform samples and emits either log magnitude alone or log magnitude together with the real and imaginary channels.

\Cref{tab:cumsum-frontend-hop160-app} summarizes the standard 10 ms hop setting.
The strongest cumsum front end uses frozen mel-spaced phases and a learned linear bottleneck from $[\log |c|,\mathrm{Re}\ c,\mathrm{Im}\ c]$ to the CNN input channels.
This reaches 95.2\% test accuracy.
Within the cumsum front-end family, frozen mel-spaced phases outperform learned phases in this sweep, and retaining phase-carrying real and imaginary channels improves over log magnitude alone.

\begin{table}[ht]
\centering
\caption{Cumsum front-end ablation at hop 160 samples. All rows use 80 acoustic channels and the same CNN backend.}
\label{tab:cumsum-frontend-hop160-app}
\small
\begin{tabular}{llrrr}
\toprule
Model & Front-end features & Window & Best val & Test \\
\midrule
MelCNNMaxPool & FFT + log-mel & 400 & 96.4 & 97.1 \\
CumsumSpec, frozen phases & $[\log |c|,\mathrm{Re}\ c,\mathrm{Im}\ c]$ & 400 & 95.2 & 95.2 \\
CumsumSpec, frozen phases & $[\log |c|,\mathrm{Re}\ c,\mathrm{Im}\ c]$ & 160 & 95.2 & 94.8 \\
CumsumSpec, learned phases & $[\log |c|,\mathrm{Re}\ c,\mathrm{Im}\ c]$ & 400 & 94.0 & 93.9 \\
LearnedSpecCNN & $\log |c|$ & 400 & -- & 93.4 \\
\bottomrule
\end{tabular}
\end{table}

\Cref{tab:cumsum-frontend-hop400-app} repeats the comparison in a no-overlap hop-400 setting.
This setting makes the phase channels more visible: with learned phases, adding real and imaginary channels raises test accuracy from 90.8\% to 93.6\%; with frozen phases, it raises test accuracy from 92.0\% to 94.0\%.

\begin{table}[ht]
\centering
\caption{Cumsum front-end ablation at hop 400 samples, where adjacent acoustic frames do not overlap.}
\label{tab:cumsum-frontend-hop400-app}
\small
\begin{tabular}{llrrr}
\toprule
Model & Front-end features & Phases & Best val & Test \\
\midrule
MelCNNMaxPool & FFT + log-mel & fixed mel & 95.3 & 95.4 \\
CumsumSpec & $[\log |c|,\mathrm{Re}\ c,\mathrm{Im}\ c]$ & frozen mel & 94.1 & 94.0 \\
CumsumSpec & $[\log |c|,\mathrm{Re}\ c,\mathrm{Im}\ c]$ & learned & 94.0 & 93.6 \\
CumsumSpec & $\log |c|$ & frozen mel & 92.7 & 92.0 \\
CumsumSpec & $\log |c|$ & learned & 92.2 & 90.8 \\
\bottomrule
\end{tabular}
\end{table}

\Cref{tab:filterbank-frontend-app} adds a related learned-filterbank diagnostic.
The convolutional sin-cos filterbank uses Hann-windowed filters rather than a running rectangular cumsum window.
The strongest configuration in this sweep is the frozen mel-spaced log-magnitude filterbank.
This contrasts with the cumsum front end in \cref{tab:cumsum-frontend-hop400-app}, where real and imaginary channels help because the running cumsum magnitude is a noisier spectral estimate.

\begin{table}[ht]
\centering
\caption{Sin-cos convolutional filterbank front-end ablation at hop 160 samples. All rows use the same CNN backend.}
\label{tab:filterbank-frontend-app}
\small
\begin{tabular}{llrr}
\toprule
Model & Front-end features & Test & Params \\
\midrule
FilterbankSinCos Frozen & $\log(\sin^2+\cos^2)$, frozen mel filters & 95.7 & 25,628 \\
FilterbankSinCosMagReIm Frozen & $[\log |c|,\sin,\cos]\to$ Linear, frozen mel filters & 95.4 & 30,468 \\
FilterbankSinCos Learned & $\log(\sin^2+\cos^2)$, learned filters & 94.7 & 57,628 \\
FilterbankSinCosCombined Learned & $[\log |c|,\sin,\cos]$, learned filters & 94.5 & 61,468 \\
\bottomrule
\end{tabular}
\end{table}

\Cref{tab:mag-re-im-depth-app} reports the corresponding end-to-end cumsum sequence models that replace both the acoustic front end and the temporal backend with cumsum layers.
The same pattern appears there: log magnitude is a useful anchor, while carrying real and imaginary channels gives the next layer access to within-window phase.

\begin{table}[ht]
\centering
\caption{Mag+re+im ablation for raw-audio cumsum stacks. These models use cumsum front ends followed by cumsum temporal layers.}
\label{tab:mag-re-im-depth-app}
\small
\begin{tabular}{lrrl}
\toprule
Model & Test & Params & Feature path \\
\midrule
CumsumE2EMag & 93.5 & 62,732 & mag+log front end; complex cumsum layers \\
MagDeep V2 & 92.8 & 71,556 & $[\log |c|,\mathrm{Re}\ c,\mathrm{Im}\ c]$ all layers, with projection \\
MagDeep V2 & 92.5 & 63,812 & log-mag first layer, then $[\log |c|,\mathrm{Re}\ c,\mathrm{Im}\ c]$ \\
MagDeep V1 & 90.5 & 65,084 & log-mag only, wider model \\
MagDeep V1 & 89.2 & 33,692 & log-mag only, smaller model \\
\bottomrule
\end{tabular}
\end{table}

\bibliography{references}

@article{warden2018speechcommands,
  title={Speech Commands: A Dataset for Limited-Vocabulary Speech Recognition},
  author={Warden, Pete},
  journal={arXiv preprint arXiv:1804.03209},
  year={2018}
}

@inproceedings{zhang2017helloedge,
  title={Hello Edge: Keyword Spotting on Microcontrollers},
  author={Zhang, Yundong and Suda, Naveen and Lai, Liangzhen and Chandra, Vikas},
  booktitle={Proceedings of Machine Learning and Systems},
  year={2018}
}

@inproceedings{choi2019tcresnet,
  title={{Temporal Convolution for Real-time Keyword Spotting on Mobile Devices}},
  author={Choi, Seungwoo and Seo, Seokjun and Shin, Beomjun and Byun, Hyeongmin and Kersner, Martin and Kim, Beomsu and Kim, Dongyoung and Ha, Sungjoo},
  booktitle={Proc. Interspeech 2019},
  pages={3372--3376},
  year={2019},
  doi={10.21437/Interspeech.2019-1363}
}

@inproceedings{kim2021bcresnet,
  title={{Broadcasted Residual Learning for Efficient Keyword Spotting}},
  author={Kim, Byeonggeun and Chang, Simyung and Lee, Jinkyu and Sung, Dooyong},
  booktitle={Proc. Interspeech 2021},
  pages={4538--4542},
  year={2021},
  doi={10.21437/Interspeech.2021-383}
}

@inproceedings{rybakov2020streaming,
  title={{Streaming Keyword Spotting on Mobile Devices}},
  author={Rybakov, Oleg and Kononenko, Natasha and Subrahmanya, Niranjan and Visontai, Mirk{\'o} and Laurenzo, Stella},
  booktitle={Proc. Interspeech 2020},
  pages={2277--2281},
  year={2020},
  doi={10.21437/Interspeech.2020-1003}
}

@inproceedings{jose2022latency,
  title={{Latency Control for Keyword Spotting}},
  author={Jose, Christin and Wang, Joe and Strimel, Grant and Khursheed, Mohammad Omar and Mishchenko, Yuriy and Kulis, Brian},
  booktitle={Proc. Interspeech 2022},
  pages={1891--1895},
  year={2022},
  doi={10.21437/Interspeech.2022-10608}
}

@inproceedings{park2019specaugment,
  title={{SpecAugment}: A Simple Data Augmentation Method for Automatic Speech Recognition},
  author={Park, Daniel S and Chan, William and Zhang, Yu and Chiu, Chung-Cheng and Zoph, Barret and Cubuk, Ekin D and Le, Quoc V},
  booktitle={Proc. Interspeech 2019},
  pages={2613--2617},
  year={2019},
  doi={10.21437/Interspeech.2019-2680}
}

@inproceedings{vaswani2017attention,
  title={Attention is All You Need},
  author={Vaswani, Ashish and Shazeer, Noam and Parmar, Niki and Uszkoreit, Jakob and Jones, Llion and Gomez, Aidan N and Kaiser, Lukasz and Polosukhin, Illia},
  booktitle={Advances in Neural Information Processing Systems},
  volume={30},
  year={2017}
}

@article{su2021roformer,
  title={{RoFormer}: Enhanced Transformer with Rotary Position Embedding},
  author={Su, Jianlin and Lu, Yu and Pan, Shengfeng and Murtadha, Ahmed and Wen, Bo and Liu, Yunfeng},
  journal={arXiv preprint arXiv:2104.09864},
  year={2021}
}

@techreport{blelloch1990prefix,
  title={Prefix Sums and Their Applications},
  author={Blelloch, Guy E.},
  institution={School of Computer Science, Carnegie Mellon University},
  number={CMU-CS-90-190},
  year={1990}
}

@inproceedings{gu2021s4,
  title={Efficiently Modeling Long Sequences with Structured State Spaces},
  author={Gu, Albert and Goel, Karan and R{\'e}, Christopher},
  booktitle={Proc. International Conference on Learning Representations},
  year={2022}
}

@inproceedings{gupta2022dss,
  title={Diagonal State Spaces are as Effective as Structured State Spaces},
  author={Gupta, Ankit and Gu, Albert and Berant, Jonathan},
  booktitle={Advances in Neural Information Processing Systems},
  volume={35},
  year={2022}
}

@article{gu2022s4d,
  title={On the Parameterization and Initialization of Diagonal State Space Models},
  author={Gu, Albert and Gupta, Ankit and Goel, Karan and R{\'e}, Christopher},
  journal={arXiv preprint arXiv:2206.11893},
  year={2022}
}

@inproceedings{smith2023s5,
  title={Simplified State Space Layers for Sequence Modeling},
  author={Smith, Jimmy T. H. and Warrington, Andrew and Linderman, Scott W.},
  booktitle={Proc. International Conference on Learning Representations},
  year={2023}
}

@article{gu2023mamba,
  title={Mamba: Linear-Time Sequence Modeling with Selective State Spaces},
  author={Gu, Albert and Dao, Tri},
  journal={arXiv preprint arXiv:2312.00752},
  year={2023}
}

@inproceedings{zeghidour2021leaf,
  title={{LEAF}: A Learnable Frontend for Audio Classification},
  author={Zeghidour, Neil and Teboul, Olivier and de Chaumont Quitry, F{\'e}lix and Tagliasacchi, Marco},
  booktitle={Proc. International Conference on Learning Representations},
  year={2021}
}
\bibliographystyle{tmlr}

\end{document}